\numberwithin{equation}{section}
\theoremstyle{plain}
\newcommand{\R}{\mathbb{R}}
\newcommand{\union}{\cup}
\newcommand{\intersect}{\cap}
\newcommand{\given}{\ensuremath{\;|\;}}
\renewcommand{\vec}[1]{\boldsymbol{#1}}
\newcommand{\treespace}[1]{\ensuremath{\mathcal{T}_{#1}}}
\newcommand{\treespaceint}[1]{\ensuremath{\mathcal{T}_{#1}^{\text{int}}}}
\newcommand{\Xcodim}[1]{\ensuremath{X^{(#1)}}}
\newcommand{\paths}[1]{\ensuremath{C_{x_0,t_0}(#1)}}
\newcommand{\brownian}[1]{\ensuremath{B_{x_0,t_0}(#1)}}
\newcommand{\walk}[2]{\ensuremath{W_{x_0,t_0}^{#1}(#2)}}
\newcommand{\drop}{\ensuremath{\mathcal{P}}}
\newcommand{\borelX}{\ensuremath{{\widehat{\mathcal{X}}}}}
\newcommand{\W}[3]{\ensuremath{\mathcal{W}_{#1,#2}^{#3}}}
\newcommand{\B}[2]{\ensuremath{\mathcal{B}_{#1,#2}}}
\newcommand{\edgelen}[2]{\ensuremath{\ell_{#1}(#2)}}
\newcommand{\pr}[2][]{\ensuremath{\mathrm{Pr}_{#1}\!\left(#2\right)}}
\newcounter{theorem}
\newtheorem{thm}[theorem]{Theorem}
\newtheorem{lem}[theorem]{Lemma}
\theoremstyle{definition}
\newtheorem{defn}[theorem]{Definition}
\newcounter{algorithm}
\newtheorem{alg}[algorithm]{Algorithm}
\begin{document}

\begin{abstract}
Cubical complexes are metric spaces constructed by gluing together unit cubes in an analogous way to the construction of simplicial complexes. 
We construct Brownian motion on such spaces, define random walks, and prove that the transition kernels of the random walks converge to that for Brownian motion. 
The proof involves pulling back onto the complex the distribution of Brownian sample paths on the standard cube, and combining this with a distribution on walks between cubes in the complex. 
The main application lies in analysing sets of evolutionary trees: several tree spaces are cubical complexes and we briefly describe our results and some applications in this context. 
Our results extend readily to a class of polyhedral complex in which every cell of maximal dimension is isometric to a given fixed polyhedron.  
\end{abstract}

\title[Random walks and {B}rownian motion on cubical complexes]{Convergence of random walks to {B}rownian motion on cubical complexes}

\author{Tom M.\ W.\ Nye}

\address{School of Mathematics, Statistics and Physics\\
Newcastle University\\UK}
\email{tom.nye@ncl.ac.uk}
\urladdr{http://www.mas.ncl.ac.uk/$\sim$ntmwn/}

\maketitle

\section{Introduction}

Most statistical analysis is carried out on data lying in a Euclidean vector space, or more generally, a smooth manifold. 
Manifold-stratified spaces, however, are playing an increasingly important role in applications and have recently attracted substantial research interest \cite{fer14b,fer14a,ooda14}. 
In a manifold-stratified space, each stratum is a manifold, and strata are glued together along their boundaries. 
In this way, the structure is smooth in the interior of strata, but can become singular where strata are joined together. 
Important examples include shape spaces \cite{kendall1999} and, of particular relevance to this article, spaces of trees \cite{bill01,fera11a,gav2016} and networks \cite{dev16}. 
Cubical complexes are manifold-stratified spaces built by joining high-dimensional cubes at their faces \cite{gromov1987,niblo1998}.  
They are polyhedral complexes in which every cell is a unit cube, and cubes can have different dimensions. 
Each cube is equipped with the $L^2$ Euclidean metric, and for certain cubical complexes this extends to define a metric on the whole complex. 
Cubical complexes provide a concrete way to construct stratified sample spaces, and various different spaces of evolutionary trees are constructed in this way \cite{bill01,gav2016,dev16}. 
The tree space of Billera, Holmes and Vogtmann \cite{bill01}, usually referred to as \emph{BHV tree space}, in particular has received considerable research attention, with methods for computing Fr\'echet means and variances \cite{mill15}, constructing principal components \cite{nye11}, and performing other statistical tests \cite{willis16}. 
In addition, asymptotic results for Fr\'echet means have also been established on BHV tree space \cite{barden2013central} and some other related spaces \cite{hotz13,barden2017}. 
However, beyond these specific asymptotic results, fundamental probability theory has not been adequately developed for general classes of stratified spaces. 

A key problem for carrying out statistics on stratified spaces is the difficulty of constructing tractable distributions which can be used for statistical inference. 
For example, consider the uniform distribution on the ball $B(x,r)$ with radius $r$ at a point $x$ in a cubical complex $X$. 
For fixed $r$, the volume of $B(x,r)$ can vary with $x$ since the ball potentially intersects a different number of cubes depending on the location of $x$. 
As a result, even very simple distributions can have normalizing constants which are computationally intractable, depending on the structure of $X$. 
An example of this problem in BHV tree space is described in more detail by Weyenberg et al \cite{wey16}.

On the other hand, tractable distributions could be constructed as the transition kernels of suitably defined stochastic processes on $X$, and used to build statistical models. 
For example, the transition kernel at time $t_0$ for a Brownian motion starting from $x_0\in X$ at time zero is a natural analog of a Euclidean multivariate normal distribution. 
Fitting this distribution to data in $X$ would give estimates of $x_0$ and $t_0$ which could be interpreted as a mean and variance in $X$.  
With this idea as motivation, this article considers Brownian motion and random walks on cubical complexes, and convergence of the transition kernels of random walks to the transition kernel for Brownian motion. 
Brownian motion on $X$ is defined in the following way. 
On the interior of each cube, it proceeds in the same way as in Euclidean space; at the boundary of a cube, the particle undergoing Brownian motion can potentially move into several adjacent cubes, depending on the structure of the complex, and does so uniformly at random. 
Specifically, if the particle is contained in the boundary of $r$ cubes of dimension $m$, and no cubes of dimension $>m$, then it moves with probability $1/r$ into each of the adjacent $m$-cubes. 
A simple example consists of the finite $3$-spider, the cubical complex formed by joining three copies of the interval $[0,1]$ at the shared origin $0$. 
A particle incident at the origin moves with equal probability onto each of the three adjacent line segments. 
The heat equation on the finite $3$-spider has an analytic solution \cite{nye14a}, but analytic solutions will not exist for general $X$. 
Random walks can be used to obtain a computational approximation to Brownian motion in order to perform inference for statistical models, but only if the random walk converges to Brownian motion in an appropriate sense. 
This is exactly our main result, Theorem~\ref{thm:conv}, in which we prove that the transition kernels of suitably defined random walks converge to Brownian motion on a general class of cubical complex. 
The method of proof involves projecting sample paths from the complex down onto a single cube and then using the pull-back to transfer measures and sets from Euclidean space back onto the complex. 
This enables an explicit construction of the transition kernel of the Brownian motion on $X$. 
The main technical challenge involved in proving Theorem~\ref{thm:conv} arises as Brownian sample paths can cross the boundaries between cubes an infinite number of times, and some careful analysis is required to handle this difficulty. 

The existing literature on Brownian motion and random walks in manifold-stratified spaces is limited. 
Frank and Durham \cite{frank1984} studied Brownian motion on trees with absorbing states at the leaves. 
Brin and Kifer~\cite{brin01} considered Brownian motion on $2$-dimensional Euclidean simplicial complexes. 
They proved existence of Brownian motion on these complexes and established properties of the infinitesimal generator. 
However they did not consider random walks and so the majority of their results are unrelated to this paper. 
Later, Enriquez and Kifer~\cite{enri01} proved a version of Donsker's theorem on metric graphs and showed that random walks along the edges of graphs converge to Brownian motion. 
The task of proving convergence of random walk to Brownian motion on graphs is a lower-dimensional analogue of proving convergence on cubical complexes. 
Indeed, they stated that they viewed their result as a first step towards considering random walks on more general simplicial complexes. 
More recently, Kostrykin et al \cite{kost2012} studied more general Brownian motions on graphs. 
There is some literature concerning random walks on groups which act on cubical complexes \cite{fernos2017}, but these random walks are not directly related to the geometrical random walks we consider here. 
In addition there is earlier work by Kendall~\cite{kendall1977}, concerning diffusion in certain quotient spaces. 

The remainder of the paper is structured as follows. 
Section~\ref{sec:prelim} contains background material on cubical complexes. 
In Section~\ref{sec:BMcubical} we construct Brownian motion on a cubical complex $X$ as a probability measure on paths in $X$ and show this corresponds to a  well-defined Markov process. 
In Section~\ref{sec:conv} we define random walks on $X$ and prove they converge to Brownian motion under a certain limit. 
We discuss applications of our results to BHV tree space in Section~\ref{sec:BHV}.
Finally, in Section~\ref{sec:concl}, we consider generalizations of Theorem~\ref{thm:conv}  and make some concluding remarks.

\section{Preliminaries}\label{sec:prelim}

\subsection{Cubical complexes}

Let $I=[0,1]$ be the unit interval and let $I^n$ be the $n$-dimensional cube, or \emph{$n$-cube}, equipped with the Euclidean metric. 
A codimension-$k$ face of $I^n$ is the subset of points for which $k$ coordinates are individually fixed to either $0$ or $1$. 
A cubical complex $X$ is a space formed by gluing together cubes (potentially of different dimensions) via isometric identification of various faces. 
Simple examples include graphs on which every edge is identified with the unit interval, and the positive orthant in $\R^n$, formed by filling the orthant with an infinite number of unit cubes $I^n$.
More interesting examples include spaces of evolutionary trees, as discussed in Section~\ref{sec:BHV}. 
We will assume that $X$ contains cubes that are at most $n$-dimensional, and that $X$ contains at least one cube of dimension $n$, for some fixed value of $n$.
We also assume that $X$ is locally finite, or in other words, that every point $x\in X$ is contained in a finite number of cubes.  
The Euclidean metric on each cube extends to the whole of $X$ in the following way. 
When $x,y\in X$ are in different cubes the distance $d(x,y)$ is defined to be the infimum of lengths of paths between $x$ and $y$ which are straight lines segments in each cube. 
The Hopf-Rinow theorem implies that $X$ is then a geodesic metric space, i.e.\ for all $x,y\in X$ the distance $d(x,y)$ is realized as the length of at least one shortest length path between $x$ and $y$ \cite[Proposition I.3.7]{BH1999}. 
We do not make any additional assumptions about this metric -- in particular we do not require $X$ to have non-positive curvature. 
The cubical complex $X$ is then equipped with the Borel sigma algebra defined using this metric. 
The restrictions of these Borel sets to any $n$-cube are exactly the Borel subsets of $I^n$.  

We will primarily be concerned with $n$-cubes in $X$, where $n$ is the highest dimension in $X$, and how they are attached to each other via codimension-$1$ faces. 
This information can be encapsulated in a bipartite graph $G_X=(U,V,E)$ where the vertices $U$ correspond to $n$-cubes in $X$; the vertices $V$ correspond to $(n-1)$-cubes in $X$; and $(u,v)\in E\subseteq U\times V$ is an edge if and only if $v$ is a codimension-$1$ face of $u$. 
We assume that $G_X$ is connected. 
Brownian motion and random walks will start from a fixed point which we assume lies in the interior of an $n$-cube $u_0\in U$, unless stated otherwise. 
Sample paths of Brownian motion on $X$ will be associated with certain walks on $G_X$, and so we make the following definition. 

\begin{defn}
A $k$-walk on $G_X$ is a sequence $u_0,v_1,u_1,v_2,\ldots,u_{k-1},v_k,u_k$ 
where $u_0,\ldots,u_k\in U$ and $v_1,\ldots,v_k\in V$, and for which every pair of adjacent vertices in the sequence corresponds to an edge so that $(u_{i-1},v_{i})\in E$ and $(v_i,u_i)\in E$ for $i=1,\ldots,k$. 
The walk is \emph{non-repeating} if $v_i\neq v_{i+1}$ for all $i=1,\ldots,k-1$. 
The set of all non-repeating $k$-walks is denoted $W_k$. 
\end{defn}

\subsection{Brownian motion on $I^n$}\label{sec:bm-cube}

When $X=I^n$ we consider a reflected Brownian motion on the cube: each coordinate on $I^n$ evolves independently according to a standard Brownian motion on $[0,1]$ with reflecting boundaries at $0$ and $1$. 
The corresponding distribution on $\paths{I^n}$, the set of paths starting at $x_0\in I^n$ and defined on the interval $[0,t_0]$, is denoted $\brownian{I^n}$. 
Let $F$ denote the set of codimension-$1$ faces of $I^n$. 
Given a path $\eta\in\paths{I^n}$ let $\pi(\eta)$ denote the time-ordered sequence of faces $f_1,f_2\ldots\in F$ that $\eta$ meets, with repeated consecutive intersections with the same face $f$ represented by a single occurrence of $f$ in $\pi(\eta)$, so that $f_i\neq f_{i+1}$ for all $i=1,2,\ldots$. 
Every time a path drawn from $\brownian{I^n}$ meets a face $f$, is does so almost surely an infinite number of times before meeting a different face $f'\neq f$. 
It is these repetitions which are removed to form the sequence $\pi(\eta)$, so that each element in the sequence potentially represents an infinite number of intersections with that face.  

\begin{lem}\label{lem:facehits}
Suppose $x_0\in I^n$ does not lie in a codimension-$2$ face.  
If $\eta$ is drawn from $\brownian{I^n}$ then $\pi(\eta)$ is almost surely well-defined and finite. 
\end{lem}

\begin{proof}
Brownian motion in the plane hits the origin with probability zero, and it follows that sample paths on $I^n$ with $\eta(0)=x_0$ avoid codimension-$2$ faces almost surely. 
The sequence $\pi(\eta)$ is therefore almost surely well-defined. 
Let $A_k\subset \paths{I^n}$ be the set of paths such that $\pi(\eta)$ has length $k$, for $k=0,1,\ldots$ and let $A_\infty$ be the complement of $\bigcup_k A_k$. 
We claim that $A_\infty$ has measure zero. 
Let $B_\rho\subset \paths{I^n}$ be the set of paths bound a distance $\rho$ away from all codimension-2 faces of $I_n$. 
Suppose $\eta$ is drawn from $\brownian{I^n}$ and $\eta\in A_\infty\cap B_\rho$. 
Then there are two faces $f\neq f'$ of $I^n$ which $\eta$ alternately hits an infinite number of times. 
It follows that there exists an increasing sequence of times $T_0,T_1,\ldots\in[0,t_0]$ such that $\eta(T_{2r})\in f$ and $\eta(T_{2r+1})\in f'$ for all $r=0,1,\ldots$. 
Since $\eta$ is bounded by distance $\rho$ away from codimension-2, there is a coordinate $i$ such that
\begin{equation*}
|\eta_i(T_{2r})-\eta_i(T_{2r+1})|\geq \rho
\end{equation*}
for all $r$. 
By definition, $\eta_i(t)$ is a reflected Brownian motion on $[0,1]$. 
The probability of the event $\eta\in A_\infty\cap B_\rho$ is therefore bounded above by the probability that the reflected Brownian motion $\eta_i$ moves a distance at least $\rho$ over the time interval $T_{2r+1}-T_{2r}$:
\begin{equation*}
\brownian{I^n}(A_\infty\cap B_\rho)\leq \pr{Z(T_{2r+1}-T_{2r})\geq \rho}
\end{equation*}
where $Z(t)$ is a reflected Brownian motion on $[0,1]$ with $Z(0)=0$. 
Since $T_{2r+1}-T_{2r}\rightarrow 0$ as $r\rightarrow\infty$, the right-hand side of the inequality can be made arbitrarily small for fixed $\rho$, and so $A_\infty\cap B_\rho$ has zero measure. 
The claim follows by taking the limit as $\rho\rightarrow 0$.
\end{proof}

\section{Brownian motion on cubical complexes}\label{sec:BMcubical}

In this section we give an explicit construction of Brownian motion on the cubical complex $X$. 
The construction relies upon a projection map which takes paths in $X$ to paths on $I^n$. 
The projection enables us to write down a probability measure on sets of paths in $X$ in terms of the reflected Brownian motion on $I^n$. 

Fix an $n$-cube $u_0\in U$ and a point $x_0$ in the interior of $u_0$. 
In analogy to the notation in Section~\ref{sec:bm-cube}, we let $\paths{X}$ be the metric space of continuous maps $\eta:[0,t_0]\rightarrow X$ which satisfy $\eta(0)=x_0$.
The metric is defined by 
\begin{equation*}
d_\infty(\eta_1,\eta_2) = \sup_{t\in[0,t_0]} d (\eta_1(t),\eta_2(t))
\end{equation*}
for any two $\eta_1,\eta_2\in\paths{X}$ where $d(\cdot,\cdot)$ denotes the metric on $X$. 
Let $X^{(k)}\subset X$ denote the union of codimension-$k$ faces in $X$. 

Define $\widehat{C}$ by
\begin{equation*}
\widehat{C} = \{\eta \in \paths{X}: \text{im}(\eta)\intersect\Xcodim{2}=\emptyset
\text{\ and\ } \eta(t_0)\in X\setminus\Xcodim{1}\},
\end{equation*}
where $\text{im}(\eta)$ denotes the image of $\eta$. 
For any path $\eta\in \widehat{C}$ we define a corresponding path $\drop(\eta)\in\paths{u_0}$ as follows.  
We define $\drop(\eta)$ to be the same as $\eta$ until $\eta$ first crosses a codimension-$1$ face $v_1$ of $u_0$ into a different $n$-cube $u_1\in U$. 
The isometry gluing $u_0$ to $u_1$ via $v_1$ extends uniquely to determine an isometry between $u_1$ and $u_0$. 
This isometry is used to define the continuation of $\drop(\eta)$ in $u_0$ until $\eta$ hits $\Xcodim{1}$ again, at which point $\drop(\eta)$ simultaneously hits a face of $u_0$. 
The process is repeated for all $t\in[0,t_0]$ to define a continuous map $\drop: \widehat{C} \rightarrow\paths{u_0}$. 
It is convenient to use a fixed isometry between $u_0$ and $I^n$, so that the map $\drop$ in fact takes values in $\paths{I^n}$. 
An example of paths on $X$ and their projections is shown in Figure~\ref{fig:projection}. 

\begin{figure}
\begin{center}
\includegraphics[scale=0.65]{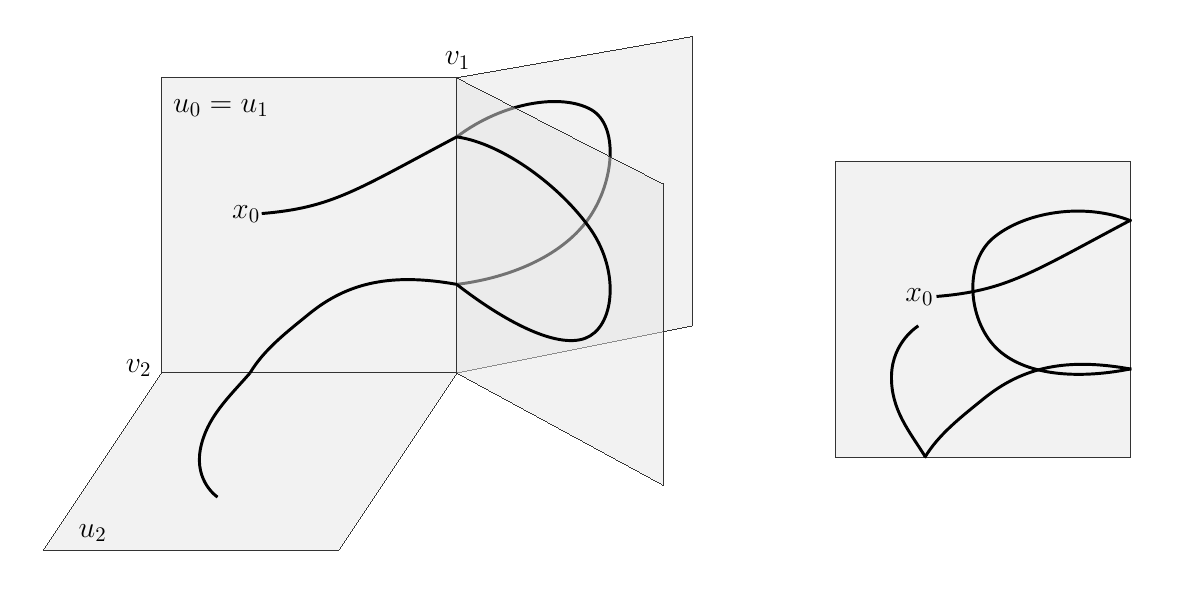}
\caption{Illustration of the projection map for a cubical complex $X$ consisting of four copies of $I^2$, glued together as shown. Left: two paths $\eta_1,\eta_2$ in $X$, constructed so that $\drop(\eta_1)=\drop(\eta_2)$. Right: the corresponding projections $\drop(\eta_i)$ onto $I^2$. 
Both paths $\eta_1,\eta_2$ lie in the same set $\widehat{C}_{\vec{f}}^{\vec{w}}$ with $k=2$ but traverse a different set of $2$-cubes. 
The associated non-repeating $G_X$ walk is $u_0,v_1,u_1,v_2,u_2$.  
}
\label{fig:projection}
\end{center}
\end{figure}

We say a sequence $(f_1,\ldots,f_k)$ of elements in $F$ is \emph{non-repeating} if $f_i\neq f_{i+1}$ for all $i=1,\ldots,k-1$. 
The set of non-repeating sequences of length $k$ is denoted $F_k$. 
For any such sequence, define ${C}_{f_1,\ldots,f_k}$ by
\begin{multline*}
{C}_{f_1,\ldots,f_k} = \{ \eta\in\paths{I^n} : \pi(\eta)\ \text{is well defined}\\ \text{with\ }\pi(\eta)=(f_1,\ldots,f_k)\ \text{and\ }\eta(t_0)\in\text{int}(I^n) \}.   
\end{multline*}
This definition includes for $k=0$ the set $C_{\ast}$ of paths which do not meet any face.  
For brevity we will usually write $\vec{f}=(f_1,\ldots,f_k)$ for an element of $F_k$. 
Lemma~\ref{lem:facehits} shows that
\begin{equation}\label{equ:total_prob}
\sum_{k\geq 0}\sum_{\vec{f}\in F_k}\brownian{I^n}(C_{\vec{f}})=1.
\end{equation} 
We need to consider the pre-images
\begin{equation*}
\widehat{C}_{\vec{f}} = \drop^{-1}(C_{\vec{f}}).
\end{equation*}
Suppose $\eta\in \widehat{C}_{\vec{f}}$ is the pre-image of a Brownian sample path on $I^n$, and consider the walk on $G_X$ determined by $\eta$.
Since every time $\drop(\eta)$ hits a face $f\in F$ it does so almost surely an infinite number of times, it follows that $\eta$ determines an infinite walk on $G_X$ almost surely whenever $\vec{f}$ is not empty. 
However, if we remove repeated crossings of each facet $v$ from the walk, in an analogous way to the definition of $\pi(\eta)$ in Section~\ref{sec:bm-cube}, we obtain a non-repeating $k$-walk on $G_X$. 
Specifically, if $v_1,\ldots,v_k$ is the list of distinct codimension-$1$ faces $\eta$ crosses, and $u_{i-1}$ is the last $n$-cube reached before $\eta$ meets $v_i$, $i=1,\ldots,k-1$, then the non-repetitive walk associated with $\eta$ is $u_0,v_1,u_1,v_2,\ldots,u_{k-1},v_k,u_k$. 
The $k$-walk associated with $\eta$ is denoted $\phi(\eta)$, and we let $W(\vec{f}) = \{ \phi(\eta):\eta\in \widehat{C}_{\vec{f}}\}\subseteq W_k$ when $\vec{f}\in F_k$. 
For any $\vec{w}\in W(\vec{f})$, we let $\widehat{C}_{\vec{f}}^{\vec{w}}=\{\eta\in\widehat{C}_{\vec{f}}:\phi(\eta)=\vec{w}\}$, 
so that 
\begin{equation*}
\widehat{C}_{\vec{f}} = \bigcup_{\vec{w}\in W(\vec{f})}\widehat{C}_{\vec{f}}^{\vec{w}}.
\end{equation*}
For fixed $\vec{f}$, each set of paths $\widehat{C}_{\vec{f}}^{\vec{w}}$ can be thought of as a different way of pulling-back the paths $C_{\vec{f}}$ onto $X$ by choosing different adjacent cubes when the paths finish crossing each codimension-$1$ face in $X$. 

Given any $\vec{w}=(u_0,v_1,u_1,v_2,\ldots,u_{k-1},v_k,u_k)\in W_k$, let
\begin{equation}\label{equ:pw}
p(\vec{w}) = \prod_{i=1}^k \left( \text{deg}\left( v_i \right) \right)^{-1}.
\end{equation}
Then, induction on $k$ shows that for each $\vec{f}\in F_k$,
\begin{equation*}
\sum_{\vec{w}\in W(\vec{f})} p(\vec{w}) = 1,
\end{equation*}
so the values $\{p(\vec{w}):\vec{w}\in W(\vec{f})\}$ determine a probability distribution on $W(\vec{f})$. 

We will be particularly concerned with the set of Brownian sample paths which end in some region of $U\subseteq X$. 
For any $k$, $\vec{f}\in F_k$ and $\vec{w}\in W(\vec{f})$, let $\widehat{C}_{\vec{f}}^{\vec{w}}(U)$ be the subset of $\widehat{C}_{\vec{f}}^{\vec{w}}$ consisting of paths $\eta$ with $\eta(t_0)\in U$. 
This set will be empty unless the walk $\vec{w}$ ends in a cube which intersects $U$. 
Similarly, let
\begin{equation*}
\widehat{C}_{\vec{f}}(U) = \bigcup_{\vec{w}\in W(\vec{f})}\widehat{C}_{\vec{f}}^{\vec{w}}(U).
\end{equation*}
Let $\borelX$ be the $\sigma$-algebra on $\widehat{C}$ generated by sets of the form $\widehat{C}_{\vec{f}}^{\vec{w}}(U)$ where $U\subseteq X$ is a Borel set. 

\begin{lem}\label{lem:borel_sets}
$\borelX$ is a sub-algebra of the Borel $\sigma$-algebra on $\widehat{C}$. 
Furthermore, each set in $\borelX$ is mapped to a Borel subset of $\paths{I^n}$ by $\drop$.  
\end{lem}

\begin{proof}
An inductive argument can be used to prove that $C_{\vec{f}}$ is a Borel set as follows. 
A sufficiently small perturbation of any $\eta\in C_{\vec{f}}$ gives either a path with $\pi(\eta)=\vec{f}$ or for which $\pi(\eta)$ is a subsequence of $f$. 
Thus the union of $C_{\vec{f}}$ and all sets $C_{\vec{f}'}$ where $\vec{f}'$ is a subsequence of $\vec{f}$ is open. 
By induction, $C_{\vec{f}}$ is therefore the intersection of an open set and a Borel set, and so is a Borel set. 
A similar argument applies to $\widehat{C}_{\vec{f}}^{\vec{w}}$. 
If $U\subseteq X$ is Borel, then subset of $\widehat{C}$ consisting of paths which end in $U$ is Borel, and hence so is $\widehat{C}_{\vec{f}}^{\vec{w}}(U)$. 
It follows that $\borelX$ is generated by a subset of the Borel $\sigma$-algebra on $\widehat{C}$, and so $\borelX$ is a sub-algebra. 
For fixed $\vec{f}$ and $\vec{w}$ the set $U\subseteq X$ determines a subset $V\subseteq I^n$ corresponding to the projection of the end points of paths in $\widehat{C}_{\vec{f}}^{\vec{w}}(U)$ under $\drop$. 
If $U$ is Borel, then $V$ is too, and $\drop\left( \widehat{C}_{\vec{f}}^{\vec{w}}(U) \right)$ is the set of paths in $C_{\vec{f}}$ which end in $V$. 
This is a Borel subset of $\paths{I^n}$. 
Similarly
\begin{equation*}
\drop\left( \left( \bigcup_k\bigcup_{\vec{f}\in F_K} \bigcup_{\vec{w}\in W(\vec{f})} \widehat{C}_{\vec{f}}^{\vec{w}} \right)^c\,\,\right) = \left( \bigcup_k\bigcup_{\vec{f}\in F_K} C_{\vec{f}}  \right)^c
\end{equation*}
and this is also a Borel subset of $\paths{I^n}$. 
It follows that every set in $\borelX$ maps under $\drop$ to a Borel subset of $\paths{I^n}$ since the image of any such set can be expressed via countable unions and complements of the above sets. 
\end{proof}

Given $\eta\in\widehat{C}_{\vec{f}}$, the set $\drop^{-1}\drop(\eta)$ is a symmetrization of $\eta$ around codimension-$1$ faces in $X$. 
This leads to the following technical lemma which is required later. 

\begin{lem}\label{lem:symm}
For all $U\subseteq X$ and all $\vec{f}\in F_k, \vec{w}\in W(\vec{f})$
\begin{equation}\label{equ:symmclosed}
 \widehat{C}_{\vec{f}}^{\vec{w}} \intersect \drop^{-1}\drop \left( \widehat{C}_{\vec{f}}^{\vec{w}}(U) \right) = \widehat{C}_{\vec{f}}^{\vec{w}}(U)
\end{equation}
\end{lem} 

\begin{proof}
It is clear that $\drop^{-1}\drop ( \widehat{C}_{\vec{f}}^{\vec{w}}(U) )$ contains $\widehat{C}_{\vec{f}}^{\vec{w}}(U)$.
It remains to show the left hand side of~$\eqref{equ:symmclosed}$ contains no elements outside $\widehat{C}_{\vec{f}}^{\vec{w}}(U)$. 
So suppose $\eta\in\widehat{C}_{\vec{f}}^{\vec{w}}(U)$ and $\eta'\in\widehat{C}_{\vec{f}}^{\vec{w}}$ satisfy $\drop(\eta)=\drop(\eta')$, so that $\eta'$ is a general element of the left hand side of~$\eqref{equ:symmclosed}$. 
Since both paths are contained in $\widehat{C}_{\vec{f}}^{\vec{w}}$, it follows that $\eta(t_0)$ and $\eta'(t_0)$ lie in the same $n$-cube and also enter that $n$-cube from the same codimension-1 face. 
Then $\drop(\eta)=\drop(\eta')$ implies that $\eta$ and $\eta'$ are identical in the cube containing the paths at time $t_0$, and in particular $\eta(t_0)=\eta'(t_0)$ so $\eta'\in\widehat{C}_{\vec{f}}^{\vec{w}}(U)$. 
This establishes equation~$\eqref{equ:symmclosed}$.
Note that it is not necessarily the case that $\eta=\eta'$: the paths can pass through different $n$-cubes whenever $\drop(\eta)$ hits the same face of $I^n$ multiple consecutive times -- see Figure~\ref{fig:projection}. 
\end{proof}

We can now define a measure on $(\widehat{C}, \borelX)$ corresponding to the distribution of sample paths under Brownian motion on $X$. 

\begin{defn}\label{def:brownianX}
Define the probability measure $\brownian{X}$ on $(\widehat{C}, \borelX)$ by
\begin{equation}\label{equ:BrownianX}
\brownian{X}(A) = \sum_{k=0}^\infty \sum_{\vec{f}\in F_k}\sum_{\vec{w}\in W(\vec{f})} p(\vec{w}) \brownian{I^n}\left(\drop(A\intersect\widehat{C}_{\vec{f}}^{\vec{w}})\right)
\end{equation}
for any $A\in\borelX$. 
\end{defn}

\begin{lem}\label{lem:exists}
Equation~$\eqref{equ:BrownianX}$ defines a probability measure on $(\widehat{C}, \borelX)$. 
\end{lem}

\begin{proof}
Each term in the sum is defined since, for all $A\in\borelX$, the set $\drop\left( A\intersect \widehat{C}_{\vec{f}}^{\vec{w}} \right)$ is a Borel subset of $\paths{I^n}$ and so $\brownian{I^n}\left(\drop(A\intersect\widehat{C}_{\vec{f}}^{\vec{w}})\right)$ is well-defined. 
Next, the total measure is given by
\begin{align*}
\brownian{X}(X) &=
\brownian{X}\left( \bigcup_k\bigcup_{\vec{f}\in F_K} \bigcup_{\vec{w}\in W(\vec{f})} \widehat{C}_{\vec{f}}^{\vec{w}}\right)\\ 
&= \sum_{k=0}^\infty \sum_{\vec{f}\in F_k} \sum_{\vec{w}\in W(\vec{f})}p(\vec{w}) \brownian{I^n}\left(\drop(\widehat{C}_{\vec{f}}^{\vec{w}})\right).
\end{align*}
But $\drop(\widehat{C}_{\vec{f}}^{\vec{w}})=C_{\vec{f}}$ so
\begin{align*}
\brownian{X}(X) &= 
\sum_{k=0}^\infty \sum_{\vec{f}\in F_k} \brownian{I^n}\left( C_{\vec{f}} \right) 
\sum_{\vec{w}\in W(\vec{f})}p(\vec{w})\\
&= \sum_{k=0}^\infty \sum_{\vec{f}\in F_k} \brownian{I^n}\left( C_{\vec{f}} \right)\\
&=1
\end{align*}
using equation~$\eqref{equ:total_prob}$.  
Finally, suppose $A_1,A_2,\ldots\in\borelX$ are disjoint sets.
We aim to show countable additivity. 
Without loss of generality we can assume each set $A_l$ is contained within a single set of the form $\widehat{C}_{\vec{f}}^{\vec{w}}$, since if this is not the case, we can decompose the sets $A_l$ into a larger collection of disjoint sets. 
We also assume at most one set $A_l$ is contained in each set $\widehat{C}_{\vec{f}}^{\vec{w}}$, so that $\vec{f}_l$ and $\vec{w}_l$ are the sequences of faces and the non-repetitive walk associated with $A_l$.  
Then
\begin{align*}
\brownian{X}(\bigcup A_l)&=
\sum_{k=0}^\infty \sum_{\vec{f}\in F_k} \sum_{\vec{w}\in W(\vec{f})} p(\vec{w})\brownian{I^n}\left( \drop(\widehat{C}_{\vec{f}}^{\vec{w}}\intersect\bigcup A_l)\right)\\
&= \sum_l p(\vec{w}_l) \brownian{I^n}(\drop(A_l))\\
&= \sum_l \brownian{X}(A_l).
\end{align*}
If there is more than one set contained in each $\widehat{C}_{\vec{f}}^{\vec{w}}$ the result still holds, since, for example, if $A,B\subset \widehat{C}_{\vec{f}}^{\vec{w}}$ are disjoint then $\brownian{I^n}(\drop(A\union B)) = \brownian{I^n}(\drop(A))+\brownian{I^n}(\drop(B))$. 
\end{proof}

The measure on paths determined by equation~$\eqref{equ:BrownianX}$ corresponds to the diffusion of particles on the cubical complex $X$ defined in the Introduction. 
The projection map $\drop$ ensures that within each $n$-cube the paths of diffusing particles correspond to sample paths in Euclidean space. 
On the other hand, the distribution on the walks on $G_X$ determined by the probabilities $p(\vec{w})$ ensures that when a particle hits a codimension-$1$ boundary it moves with equal probability into each neighbouring $n$-cube. 
More specifically, using equation~$\eqref{equ:BrownianX}$, it is easy to show that if $\eta$ is sampled from $\brownian{X}$, then $\pr{\eta\in \widehat{C}_{\vec{f}}^{\vec{w}}\given \eta\in \widehat{C}_{\vec{f}}} = p(\vec{w})$. 
It follows that for $0<t<t_0$ the position $Z(t)\in X$ of a particle along a path sampled from $\brownian{X}$ is a well-defined Markov process. 
In fact, it is possible to write down an explicit formula for the transition kernel of the Markov process, as given in Definition~\ref{def:trans_kernel} later.

\section{Random walks and convergence}\label{sec:conv}

In this Section we define random walks on $X$ and prove they converge to Brownian motion in an appropriate sense. 
Random walk on $X$ is defined by the following algorithm. 
As input it takes $x_0\in X$ (assumed to be in the interior of some $n$-cube), $t_0>0$ and $m$, the number of steps of the random walk. 
We fix $\epsilon = (3t_0/m)^{1/2}$, so that the uniform distribution on $[-\epsilon,\epsilon]$ has variance $t_0/m$. 
The number of steps $m$ must be sufficiently large that $\epsilon<1$. 
We maintain a list $L$ throughout the algorithm, with $L$ initially empty. 
The list is used to record points at which the random walk crosses codimension-$1$ faces. 

\begin{alg}\label{alg:rw}
Let $y_0=x_0$.
For $j=1,2,\ldots,r$ where $r=m\times n$ repeat the following:
\begin{enumerate}
\item Let $u$ be an $n$-cube containing $y_{j-1}$ and let $\vec{\xi}:u\rightarrow I^n$ be an isometry, so that $\xi_i(y)$ denotes the $i$-th coordinate of any point $y\in u$ and $i=1,\ldots,n$. 
\item Sample $k$ uniformly at random from $1,\ldots,n$ and sample $\delta \sim U(-\epsilon, \epsilon)$. 
\item Perform one of the following steps, depending on the value of $\ell^* := \xi_k(y_{j-1})+\delta$.
\begin{enumerate}
\item If $0\leq \ell^* \leq 1$ then let $y_j$ be the point in $u$ with $\xi_k(y_j)=\ell^*$ and all other coordinates equal to $y_{j-1}$.
\item Otherwise either $\ell^*<0$ or $1<\ell^*$. 
Let $v$ be the $(n-1)$-cube associated with $\xi_k=0$ or $\xi_k=1$ respectively. 
Let $u^*$ be an $n$-cube sampled uniformly at random from the $n$-cubes which share the face $v$, including $u$ itself. 
There is then a unique isometry $\vec{\xi}^*:u^*\rightarrow I^n$ satisfying $\vec{\xi}(y)=\vec{\xi}^*(y)$ for all $y\in v$. 
Let $y_j$ be the point in $u^*$ with coordinates
\begin{equation*}
\xi_i^*(y_j) = \begin{cases} \xi_i(y_{j-1}) & \text{for all\ }i\neq k,\\
-\ell^*& \text{if\ }i=k\ \text{and}\ \ell^*<0,\\
2-\ell^*& \text{if\ }i=k\ \text{and}\ 1<\ell^*.
\end{cases}
\end{equation*}
Let $x$ be the point in $v$ with coordinates $\xi_i(x)= \xi_i(y_{j-1})$ for all $i\neq k$ and let $L:=L\union \{x\}$
\end{enumerate}
\end{enumerate} 
\end{alg}

There are many alternative ways to define random walks on $X$. 
For example, a fixed order of coordinates could be used at step $1$ of Algorithm~\ref{alg:rw} rather than randomly sampling the coordinate. 
Similarly, any distribution with zero mean and variance $t_0/m$ can be used to generate $\delta$ in step 2, provided the support of the distribution lies in $(-1,1)$. 
The proof of Theorem~\ref{thm:conv} applies directly in both these cases. 
The requirement that $\delta$ lies in $(-1,1)$ ensures that at most one codimension-1 cube is traversed so that step 3 of the algorithm makes sense. 
If this constraint is dropped, the proof of Theorem~\ref{thm:conv} still applies, but step~3 of the algorithm is more involved. 
It is also possible to sample the innovation uniformly from a small ball centred at $y_j$ at each iteration of the algorithm, which we return to in Section~\ref{sec:BHV}.  

We can use Algorithm~\ref{alg:rw} to simulate paths $\eta\in\paths{X}$ by linear interpolation between points. 
If at iteration $j$ step 3(a) was performed, then the corresponding section of path is the straight line segment joining $y_{j-1}$ and $y_j$ within an $n$-cube. 
Alternatively, if step 3(b) was performed at iteration $j$, then $L$ contains an associated point $x\in\Xcodim{1}$, and the corresponding path comprises the line segment $y_{j-1},x$ followed by the line segment $x,y_{j}$.  
The corresponding distribution on $\paths{X}$ is denoted $\walk{m}{X}$.

When $X=I^n$, Algorithm~\ref{alg:rw} corresponds to a reflected random walk on the $n$-cube. 
Standard theory gives weak convergence of random walk on $I^n$ to Brownian motion: 
\begin{equation*}
\walk{m}{I^n}\xrightarrow{w}\brownian{I^n}\quad \text{as}\  m\rightarrow\infty.
\end{equation*} 

The following lemma relates random walk on $X$ to random walk on $I^n$ via the projection map $\drop$. 

\begin{lem}\label{lem:reflect}
If $U\subset X$ is a Borel set then for any $k>0$, $\vec{f}\in F_k$ and $\vec{w}\in W(\vec{f})$, 
\begin{equation}\label{equ:pullback}
\walk{m}{X}\left(\widehat{C}_{\vec{f}}^{\vec{w}}(U)\right) = p(\vec{w})\walk{m}{I^n}\left(\drop(\widehat{C}_{\vec{f}}^{\vec{w}}(U))\right).
\end{equation}
\end{lem}

\begin{proof}
We first note that if $\eta$ is a path on $X$ generated by Algorithm~\ref{alg:rw} then $\drop(\eta)$ has the distribution $\walk{m}{I^n}$. 
This is because the coordinates $\vec{\xi}$ defined in the algorithm determine the image under $\drop$ but also precisely a random walk on $I^n$. 
This is equivalent to 
\begin{equation*}
\walk{m}{X}\left(\drop^{-1}(A)\right) = \walk{m}{I^n}(A) 
\end{equation*}
for all $A$ such that $\drop^{-1}(A)\in\borelX$. 
Given $\eta\in\drop^{-1}(A)$ where $A\subseteq C_{\vec{f}}$ for some sequence of faces $\vec{f}\in F_k$, then the distribution of $\phi(\eta)$ is determined by the probabilities $p(\vec{w})$. 
It follows that
\begin{equation*}
\walk{m}{X}\left( \drop^{-1}(A)\intersect \widehat{C}_{\vec{f}}^{\vec{w}} \right) = p(\vec{w})\walk{m}{I^n}(A). 
\end{equation*}
By substituting in $A = \drop(\widehat{C}_{\vec{f}}^{\vec{w}}(U))$ and applying Lemma~\ref{lem:symm}, we obtain equation~$\eqref{equ:pullback}$.
\end{proof}

\begin{defn}\label{def:trans_kernel}
Let $\W{x_0}{t_0}{m}$ denote the distribution on $X$ given by the end-points $y_r$ simulated by Algorithm~\ref{alg:rw}, or equivalently given by sampling a path $\eta$ from $\walk{m}{X}$ and taking $\eta(t_0)$. 
Let $\B{x_0}{t_0}$ be the distribution on $X$ induced in a similar way, but for which $\eta$ is drawn from $\brownian{X}$. 
Then
\begin{align*}
\W{x_0}{t_0}{m}(U) &= \walk{m}{X}(\widehat{C}(U))\\
\B{x_0}{t_0}(U) &= \brownian{X}(\widehat{C}(U))
\end{align*}
for any Borel set $U\subseteq X$, where $\widehat{C}(U)\subset\widehat{C}$ denotes the union of all sets of the form $\widehat{C}_{\vec{f}}^{\vec{w}}(U)$. 
\end{defn}

The distributions $\W{x_0}{t_0}{m}$ and $\B{x_0}{t_0}$ are the transition kernels for the $m$-step random walk and Brownian motion on $X$ respectively. 
More explicitly, 
\begin{equation*}
\B{x_0}{t_0}(U) = \sum_{k=0}^\infty \sum_{\vec{f}\in F_k}\sum_{\vec{w}\in W(\vec{f})} p(\vec{w}) \brownian{I^n}\left(\drop(\widehat{C}_{\vec{f}}^{\vec{w}}(U))\right). 
\end{equation*}
However, this equation is not useful computationally, since for any given $U\subseteq X$ the sum over walks which end in $U$ is not generally computationally feasible.  

If $P$ is a measure on a metric space, then a $P$-continuity set $U$ is a Borel set such that $P(\partial U)=0$ where $\partial U$ is the boundary of $U$. 
In order to prove weak convergence of $\W{x_0}{t_0}{m}$ to $\B{x_0}{t_0}$, it is sufficient to show that $\W{x_0}{t_0}{m}(U)\rightarrow\B{x_0}{t_0}(U)$ for any $\B{x_0}{t_0}$-continuity set $U$. 

\begin{lem}\label{lem:ctty}
If $U\subseteq X$ is a $\B{x_0}{t_0}$-continuity set, then 
\begin{enumerate}
\item $\widehat{C}_{\vec{f}}^{\vec{w}}(U)$ is a $\brownian{X}$-continuity set for any $\vec{f},\vec{w}$, and
\item $\drop( \widehat{C}_{\vec{f}}^{\vec{w}}(U) )$ is a $\brownian{I^n}$-continuity set.
\end{enumerate}
\end{lem}

\begin{proof}
Fix a $\B{x_0}{t_0}$-continuity set $U\subset X$ and $\vec{f}\in F_k,\vec{w}\in W(\vec{f})$. 
The boundary of $ \widehat{C}_{\vec{f}}^{\vec{w}}(U)$ decomposes into two pieces: $\widehat{C}_{\vec{f}}^{\vec{w}}(\partial U)$ and paths in the boundary of $\widehat{C}_{\vec{f}}^{\vec{w}}$.  
Both have zero measure with respect to $\brownian{X}$ so $\brownian{X}\left(\partial \widehat{C}_{\vec{f}}^{\vec{w}}(U)\right) =0$ and $\widehat{C}_{\vec{f}}^{\vec{w}}(U)$ is a $\brownian{X}$-continuity set.
Similarly, for fixed $\vec{f}$ and $\vec{w}$, $\partial\drop(\widehat{C}_{\vec{f}}^{\vec{w}}(U))$ decomposes into two parts: $\drop(\widehat{C}_{\vec{f}}^{\vec{w}}(\partial U))$ and a set of paths which lie the boundary of $C_{\vec{f}}$ and which therefore has zero measure with respect to $\brownian{I^n}$. 
Thus if $U$ is a $\B{x_0}{t_0}$-continuity set it follows that
\begin{align*}
0 &= \brownian{X}(\widehat{C}_{\vec{f}}^{\vec{w}}(\partial U))\\
&= p(\vec{w}) \brownian{I^n}(\drop(\widehat{C}_{\vec{f}}^{\vec{w}}(\partial U)))\ \text{using Definition~\ref{def:brownianX}}\\
&= p(\vec{w}) \brownian{I^n}(\partial\drop(\widehat{C}_{\vec{f}}^{\vec{w}}( U)))
\end{align*}
and so $\drop(\widehat{C}_{\vec{f}}^{\vec{w}}( U))$ is a continuity set with respect to $\brownian{I^n}$. 
\end{proof}

\begin{thm}\label{thm:conv}
\begin{equation}\label{equ:weakconv}
\W{x_0}{t_0}{m}\xrightarrow{w}\B{x_0}{t_0}
\end{equation}
as $m\rightarrow\infty$, where $\xrightarrow{w}$ denotes weak convergence of probability measures. 
\end{thm}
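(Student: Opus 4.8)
The plan is to use the reduction stated before Lemma~\ref{lem:ctty}: it suffices to show $\W{x_0}{t_0}{m}(U)\to\B{x_0}{t_0}(U)$ for every $\B{x_0}{t_0}$-continuity set $U\subset\treespace{N}$. Fix such a $U$. First I would decompose the set $\widehat{C}(U)$ of paths ending in $U$, noting that up to the $\brownian{\treespace{N}}$-null collection of paths terminating on $\tscodim{N}{1}$ or meeting $\tscodim{N}{2}$ it is the disjoint union $\bigcup_{k\geq0}\bigcup_{\vec{i},\vec{j}}\widehat{C}_{\vec{i}}^{\vec{j}}(U)$. Applying Definition~\ref{def:treespace_BM} to $\B{x_0}{t_0}(U)=\brownian{\treespace{N}}(\widehat{C}(U))$, and countable additivity together with Lemma~\ref{lem:reflect} to $\W{x_0}{t_0}{m}(U)=\walk{m}{\treespace{N}}(\widehat{C}(U))$, writes both quantities as series of the same shape,
\begin{align*}
\B{x_0}{t_0}(U) &= \sum_{k=0}^\infty 3^{-k}\sum_{\vec{i},\vec{j}}\brownian{X_+}(\drop(\widehat{C}_{\vec{i}}^{\vec{j}}(U))),\\
\W{x_0}{t_0}{m}(U) &= \sum_{k=0}^\infty 3^{-k}\sum_{\vec{i},\vec{j}}\walk{m}{X_+}(\drop(\widehat{C}_{\vec{i}}^{\vec{j}}(U))),
\end{align*}
so the problem reduces to comparing two series indexed by the same data $(k,\vec{i},\vec{j})$.

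For each fixed length $k$ the inner sum is finite, since the sequences $\vec{i}\in\{1,\ldots,N'\}^k$ with no equal consecutive entries number $N'(N'-1)^{k-1}$ and there are $3^k$ choices of $\vec{j}$. By Lemma~\ref{lem:ctty} each set $\drop(\widehat{C}_{\vec{i}}^{\vec{j}}(U))$ is a $\brownian{X_+}$-continuity set, so equation~\eqref{equ:convXplus} and the Portmanteau theorem give the termwise convergence $\walk{m}{X_+}(\drop(\widehat{C}_{\vec{i}}^{\vec{j}}(U)))\to\brownian{X_+}(\drop(\widehat{C}_{\vec{i}}^{\vec{j}}(U)))$, and hence convergence of each length-$k$ block as $m\to\infty$. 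The only remaining issue, and the one I expect to be the main obstacle, is justifying the exchange of $\lim_{m\to\infty}$ with the infinite sum over $k$.

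To control the tails I would use $\drop(\widehat{C}_{\vec{i}}^{\vec{j}}(U))\subset C_{\vec{i}}$ and sum over the $3^k$ values of $\vec{j}$ to bound the level-$k$ block by $\sum_{|\vec{i}|=k}\walk{m}{X_+}(C_{\vec{i}})$, so that $\sum_{k\geq K}(\text{block})$ is at most the $\walk{m}{X_+}$-probability that a path makes at least $K$ distinct face crossings. The Brownian tail $\sum_{k\geq K}\sum_{\vec{i}}\brownian{X_+}(C_{\vec{i}})$ vanishes as $K\to\infty$ by Lemma~\ref{lem:planeseq} and equation~\eqref{equ:X_total_prob}. The hard part is making the random-walk tail small \emph{uniformly in $m$}, for which I would prove a discrete analogue of Lemma~\ref{lem:planeseq}. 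Working on the event $B_\rho$ (and sending $\rho\to0$ afterwards), a path crossing $\geq K$ distinct faces must alternate between coordinate hyperplanes many times while staying a distance $\geq\rho$ from the codimension-$2$ set between consecutive crossings; each such excursion forces a one-dimensional coordinate walk to travel from distance $\geq\rho$ to $0$ in a short time, an event admitting a reflection-principle / hitting-time bound that is uniform in the step size $t_0/m$. Summing these estimates over $k\geq K$ should yield $\sup_m\sum_{k\geq K}(\text{block})\to0$ as $K\to\infty$.

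Finally I would assemble an $\varepsilon$-argument: given $\varepsilon>0$, choose $K$ so that both the Brownian tail and the uniform random-walk tail are below $\varepsilon/3$; then invoke the finite-block termwise convergence to pick $m$ large enough that the partial sums over $k<K$ differ by less than $\varepsilon/3$; the triangle inequality then gives $|\W{x_0}{t_0}{m}(U)-\B{x_0}{t_0}(U)|<\varepsilon$. Since this holds on every $\B{x_0}{t_0}$-continuity set $U$, the weak convergence~\eqref{equ:weakconv} follows.
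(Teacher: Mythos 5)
Your proposal follows essentially the same route as the paper's proof: reduce to $\B{x_0}{t_0}$-continuity sets, decompose the paths ending in $U$ into the sets $\widehat{C}_{\vec{i}}^{\vec{j}}(U)$, transfer each term to $X_+$ via Lemma~\ref{lem:reflect}, obtain termwise convergence from the weak convergence \eqref{equ:convXplus} together with the continuity-set statement of Lemma~\ref{lem:ctty}, and close with a truncation in $k$.

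The one point of divergence is the tail estimate, and here you are in fact more careful than the paper. The paper's proof handles the tail by noting the series is bounded by $1$ and choosing $K$ so that both the random-walk tail and the Brownian tail are below $\epsilon/4$; it is silent on whether such a $K$ can be taken independently of $m$, which is precisely the issue you single out as the hard part. Your instinct is correct that some uniformity in $m$ is needed for the final triangle-inequality step, and your proposed route (a discrete analogue of Lemma~\ref{lem:planeseq} with hitting-time bounds uniform in the step size $t_0/m$) is one legitimate way to obtain it, though as written it remains a sketch. A shorter alternative that stays within tools already available: bound the level-$k$ blocks by $\sum_{|\vec{i}|=k}\walk{m}{X_+}(C_{\vec{i}})$, and observe that the closure in $\paths{X_+}$ of $\bigcup_{|\vec{i}|>K}C_{\vec{i}}$ is contained in the union of $\{\eta : \pi(\eta)\ \text{has length} > K\}$ with the $\brownian{X_+}$-null sets of paths meeting $X^{(2)}$, paths in $A^{\infty}$, and paths ending on the boundary of $X_+$ (a uniform limit of paths hitting $K+1$ hyperplanes with distinct consecutive indices either retains such a hitting pattern or develops coincident hitting times, forcing a hit of $X^{(2)}$). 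The Portmanteau inequality for closed sets then gives $\limsup_m$ of the random-walk tail at most the Brownian tail, which vanishes as $K\rightarrow\infty$ by Lemma~\ref{lem:planeseq} and \eqref{equ:X_total_prob}. Either repair makes the truncation step airtight; the paper's own write-up glosses over it.
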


\begin{proof}
Fix any Borel set $U\subseteq X$, and fix $k\geq 0$, $\vec{f}\in F_k$ and $\vec{w}\in W(\vec{f})$. 
If $U$ is a $\B{x_0}{t_0}$-continuity set then equation~$\eqref{equ:pullback}$ gives
\begin{align}
\walk{m}{X}(\widehat{C}_{\vec{f}}^{\vec{w}}(U)) &= p(\vec{w})\walk{m}{I^n}(\drop(\widehat{C}_{\vec{f}}^{\vec{w}}(U)))\nonumber\\
& \rightarrow p(\vec{w})\brownian{I^n}(\drop(\widehat{C}_{\vec{f}}^{\vec{w}}(U)))\ \text{as\ }m\rightarrow\infty\label{equ:conv}\\
& = \brownian{X}(\widehat{C}_{\vec{f}}^{\vec{w}}(U)).\nonumber
\end{align}
Convergence in equation~$\eqref{equ:conv}$ occurs because random walk on $I^n$ converges weakly to Brownian motion, and by Lemma~\ref{lem:ctty}, $\drop(\widehat{C}_{\vec{f}}^{\vec{w}}(U))$ is a continuity set with respect to $\brownian{I^n}$.  

Now for any continuity set $U$
\begin{equation*}
\W{x_0}{t_0}{m}(U) = \sum_{k=0}^{\infty}\sum_{\vec{f}\in F_k}\sum_{\vec{w}\in W(\vec{f})}
\walk{m}{X}(\widehat{C}_{\vec{f}}^{\vec{w}}(U)).
\end{equation*}
Since this is bounded above by $1$, for any $\epsilon>0$ there exists $K$ such that:
\begin{equation*}
|\sum_{k>K}\sum_{\vec{f}\in F_k}\sum_{\vec{w}\in W(\vec{f})}
\walk{m}{X}(\widehat{C}_{\vec{f}}^{\vec{w}}(U))|<\frac{\epsilon}{4}.
\end{equation*}
and 
\begin{equation*}
|\sum_{k>K}\sum_{\vec{f}\in F_k}\sum_{\vec{w}\in W(\vec{f})}
\brownian{X}(\widehat{C}_{\vec{f}}^{\vec{w}}(U))|<\frac{\epsilon}{4}
\end{equation*}
Then, by taking $m$ sufficiently large
\begin{equation*}
| \sum_{k\leq K}\sum_{\vec{f}\in F_k}\sum_{\vec{w}\in W(\vec{f})} \left( 
\walk{m}{X}(\widehat{C}_{\vec{f}}^{\vec{w}}(U)) - 
\brownian{X}(\widehat{C}_{\vec{f}}^{\vec{w}}(U)) 
 \right)  |<\frac{\epsilon}{2}
\end{equation*}
and so $|\W{x_0}{t_0}{m}(U)-\B{x_0}{t_0}(U)|<\epsilon$.
This proves the weak convergence in Equation~$\eqref{equ:weakconv}$.
\end{proof}

\section{Application: evolutionary trees}\label{sec:BHV}

The main application of Theorem~\ref{thm:conv} lies in the analysis of evolutionary trees. 
A number of different spaces of evolutionary trees have been constructed, and the most studied space, due to Billera, Holmes and Vogtmann~\cite{bill01} and known as BHV tree space, is a cubical complex. 
The BHV tree space $\treespace{N}$ consists of trees with $N$ leaves labelled $1,\ldots,N$. 
The trees are edge weighted: each edge has an associated weight, or length, which takes values in $\R_{>0}=\{x\in \R: x>0\}$. 
Each tree contains a unique vertex of degree two, called the \emph{root}, which is labelled $0$. 
All vertices other than the leaves and the root have degree strictly greater than two, and are called \emph{internal vertices}. 
If all internal vertices have degree three, then the tree is called \emph{resolved} and it contains $2N-2$ edges. 
If one or more internal vertices has degree four or higher then the tree is called \emph{unresolved} and it contains $<2N-2$ edges. 
BHV tree space $\treespace{N}$ parametrizes all such resolved and unresolved trees. 

Cutting any edge on a tree induces a bipartition of the labels $\{0,1,\ldots,N\}$ into two disjoint subsets. 
Any such bipartition is called a \emph{split}, and the set of splits associated with tree is called its \emph{topology}. 
Every edge is identified with a split, so the terms are used interchangeably. 
Arbitrary collections of splits do not generally represent tree topologies: a certain compatibility condition on the collection of splits must be satisfied for this to be true. 
The splits which end in leaves are displayed by all trees, and are called \emph{pendant} splits. 
It follows that
\begin{equation*}
\treespace{N} = \R_{> 0}^N\times\treespaceint{N}
\end{equation*}
where the first term in the product represents the lengths of the pendant splits and the second term parametrizes the lengths of internal splits (namely those corresponding to edges which do not contain a leaf). 
If we ignore pendant splits then the set of trees with a given fully resolved topology is parametrized by $\R_{>0}^{N-2}$, where each axis corresponds to the length of one of the $N-2$ internal splits in the topology. 
The boundary of $\R_{>0}^{N-2}$ corresponds to trees for which one or more splits have been shrunk down to length zero and removed from the tree topology, and so corresponds to unresolved trees. 
The copy of $\R_{\geq 0}^{N-2}$ obtained in this way is called an \emph{orthant}. 
There are $M=(2N-3)!!$ different possible fully resolved topologies, so $\treespaceint{N}$ consists of $M$ copies of $\R_{\geq 0}^{N-2}$ which are glued along their boundaries. 
The gluing identifies unresolved trees which can be obtained by removing splits from distinct tree topologies. 

By filling each orthant with an infinite array of $(N-2)$-cubes, it can be seen that $\treespaceint{N}$ is a cubical complex. 
Billera, Holmes and Vogtmann~\cite{bill01} showed that the local Euclidean metric on each cube extends to a globally defined metric, and that $\treespace{N}$ is a so-called non-positively curved space~\cite{BH1999}. 
Brownian motion on $\treespaceint{N}$ depends fundamentally on how the codimension-$1$ faces of orthants are glued together, but not on higher codimension faces. 
Any internal edge $e$ in a fully resolved rooted tree gives rise to a subtree $((A,B),C)$ where $A,B$ are the two subtrees descended from $e$ and $C$ is the subtree attached to the vertex $v$ of $e$ closest to the root. 
Shrinking $e$ down to zero length and removing it results in $v$ having degree $4$, and the subtree descended from $v$ is then $(A,B,C)$. 
A single edge can be added to this tree to give either $((B,C),A)$ or $((C,A),B)$ as the subtree descended from $v$. 
It follows that each codimension-$1$ face of an orthant $\R_{\geq 0}^{N-2}$ is shared by two other such orthants. 
The bipartite graph representing connections between orthants in $\treespaceint{N}$ therefore contains $(2N-3)!!$ vertices representing the orthants, each with degree $N-2$, and $\frac{1}{3}(N-2)\times(2N-2)!!$
vertices representing codimension-$1$ faces, each with degree 3. 
The stochastic processes we have defined on cubical complexes extend trivially from $\treespaceint{N}$ to the whole of $\treespace{N}$ by imposing a reflecting boundary when each pendant edge length is zero. 

Theorem~\ref{thm:conv} establishes that while distributions of the form $\B{x_0}{t_0}$ can be used to construct parametric statistical models on $\treespace{N}$, computational inference can be performed using the corresponding distributions $\W{x_0}{t_0}{m}$ and Algorithm~\ref{alg:rw}. 
In fact, Algorithm~\ref{alg:rw} can be simplified to operate more explicitly on the set of orthants, rather than on cubes, in the following way. 
As input the algorithm takes a fully resolved tree $x_0\in \treespace{N}$, a value $t_0>0$ and the number of steps $m$ of the random walk. 
As previously, we fix $\epsilon = (3t_0/m)^{1/2}$. 

\begin{alg}\label{alg:rw-trees}
Let $y_0=x_0$. 
For $j=1,2,\ldots,r$ where $r=m\times (N-2)$ repeat the following:
\begin{enumerate}
\item Pick an internal split $e$ from ${y_{j-1}}$ uniformly at random. 
\item Sample a realization $\delta$ from $U(-\epsilon,\epsilon)$ and let $\ell^\ast=\delta_j+\edgelen{y_{j-1}}{e}$ where $\edgelen{y_{j-1}}{e}$ is length of $e$ in $y_{j-1}$.
\item Set $y_j:=y_{j-1}$ and then change a single edge length in $y_i$ as follows:
\begin{enumerate}
    \item If $\ell^\ast>0$ set $\edgelen{y_j}{e}:=\ell^\ast$.
    \item Otherwise let $e^\ast$ be a split selected uniformly at random from the set $\{e,e',e''\}$ where $e',e''$ are the two alternative splits which can replace $e$ in $y_{j-1}$.  
    Replace $e$ with $e^\ast$ in $y_j$ and set $\edgelen{y_j}{e^\ast}:=-\ell^\ast$.
    \end{enumerate}
\end{enumerate} 
\end{alg}

This algorithm samples from the same distribution as Algorithm~\ref{alg:rw} when $\treespace{N}$ is regarded as a cubical complex, but it is simpler to implement and has the advantage of operating more transparently on trees. 

Two spaces in the literature which are related to BHV tree space, are also cubical complexes to which Theorem~\ref{thm:conv} applies. 
The first is a space of so-called \emph{ultrametric} phylogenetic trees \cite{gav2016}. 
Ultrametric phylogenetic trees have edge lengths proportional to the length of time between speciation events represented by each vertex. 
This induces a constraint on the trees, so that the path length of each leaf from the root is the same for all leaves, each of which represents a present-day species. 
Secondly, a space of phylogenetic networks has recently been constructed \cite{dev16} which is a cubical complex. 
Phylogenetic networks are a generalization of phylogenetic trees that model `non-vertical' patterns of evolution such as hybridization events.  

Other spaces of trees exist which are polyhedral complexes in which the cells are not cubes. 
The `projective tree space' Zairis et al.~\cite{zair16}, for example, consists of the subset of BHV tree space for which the sum of edge lengths on each tree is $1$.
This space is a simplicial complex in which every cell of maximum dimension is isometric to the standard Euclidean simplex. 
While Theorem~\ref{thm:conv} does not apply directly to this space, it is evident that methods used to prove the theorem can be adapted to apply to polyhedral complexes in which every cell is isometric to some given fixed polyhedron. 
More specifically, suppose $X$ is a polyhedral complex for which the highest dimension of any cell is $n$, and suppose that all $n$-cells in $X$ are isometric to some given polyhedron $S$. 
Furthermore suppose the following condition holds: for any two $n$-polyhedra $S_1,S_2$ in $X$ which are glued together via a shared codimension-$1$ face, suppose the isometry which defines the gluing extends to determine a unique isometry between $S_1$ and $S_2$. 
It follows that a projection map $\drop$ from paths on $X$ to paths on $S$ can be defined, and most other elements of the proof of Theorem~\ref{thm:conv} can be extended to this new setting. 
Some care, however, is required to adapt Algorithm~\ref{alg:rw}. 
In the new setting, the innovation takes the following form. 
Suppose $y_{j-1}$ lies in the interior of an $n$-polyhedron $S$.  
Let $\delta$ be a sample from an isotropic multivariate normal distribution with variance proportional to $t_0/m$. 
If $y_{j-1}+\delta$ lies in $S$ then that point is taken to define $y_j$. 
However, if $y_{j-1}+\delta$ lies outside $S$ then $y_j$ is obtained by following a path of length $\|\delta\|$ through $X$ from $y_{j-1}$ which potentially moves between different $n$-polyhedra. 
The path is initially in direction $\delta$, but it can `reflect' at codimension-$1$ boundaries of polyhedra. 
Crucially, at each codimension-$1$ boundary, it moves with equal probability into each adjacent $n$-polyhedron. 
Figure~\ref{fig:rw-general} illustrates the innovation step of the random walk. 
Using this more general random walk algorithm, the proof of Theorem~\ref{thm:conv} generalizes to this wider class of polyhedral complex, and so, for example, random walks can be used to approximate Brownian motion on the projective tree space of Zairis et al. 

\begin{figure}
\begin{center}
\includegraphics[scale=0.9]{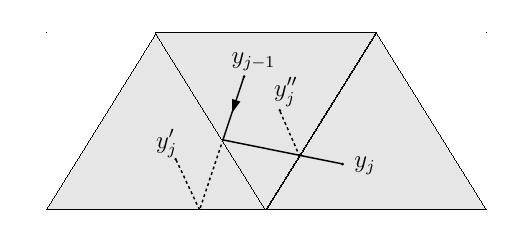}
\caption{One step of a random walk for a simplicial complex $X$ consisting of three equilateral triangles. 
Starting at $y_{j-1}$, the random vector $\delta$ determines the initial direction of the path, as indicated by the arrow. 
At the first codimension-$1$ boundary, the path reflects with probability $1/2$. 
The path continues across the second boundary, again with probability $1/2$ to give the next point $y_j$. 
The dotted lines show the alternatives as each boundary is crossed. 
Point $y_j'$ is reached with probability $1/2$, whereas point $y_j''$ has probability $1/4$. 
The length of each of the three possible paths is $\|\delta\|$.}
\label{fig:rw-general}
\end{center}
\end{figure}

\section{Conclusion}\label{sec:concl}

Theorem~\ref{thm:conv} establishes that random walks on certain cubical complexes converge to a well-defined limit, namely a Brownian motion on the complex. 
The motivation for proving the theorem is the idea that transition kernels of Markov processes on cubical complexes are a useful source of parametric distributions which can be used to construct statistical models. 
Inference procedures for models on BHV tree space constructed in this way are being developed by the author. 
The theorem is limited in a number of ways. 
First, it is important to note that we have not proved a version of Donsker's theorem on a cubical complex $X$; in particular we have not proved convergence of the distributions of sample paths $\walk{m}{X}$ to $\brownian{X}$ as $m\rightarrow\infty$. 
Instead, the theorem is concerned with convergence of the transition kernels of the Markov processes on $X$. 
Secondly, the theorem is limited by the assumption that $x_0$, the initial point of the Brownian motion, lies in the interior of some cube of dimension $n$ which is maximal in $X$. 
In fact, provided $x_0$ does not lie in a codimension-$2$ face of an $n$-cube or a face of higher codimension, the projection map $\drop$ can still be defined and the proof of Theorem~\ref{thm:conv} goes through, subject to a minor change in the definition of Algorithm~\ref{alg:rw} to take account of the position of $x_0$. 
When $x_0$ lies in a codimension-$2$ face of an $n$-cube, there is no canonical definition of $\drop$. 
We have also ignored the case that $x_0$ is contained in a cube of dimension $n'<n$ which is \emph{not} a face of an $n$-cube, where $n$ is the maximal dimension. 
Under these circumstances, the particle will diffuse until it hits a cube of higher dimension and will almost surely not return to cubes of dimension $n'$ or lower. 

The Brownian motion studied in this article could be generalized in several different ways by analogy with diffusion processes in Euclidean space. 
Non-positively curved cubical complexes, which include BHV tree space, are an important class of complex. 
In a non-positively curved cubical complex, every pair of points is connected by a unique geodesic, or path of shortest length \cite{BH1999}. 
Geodesics enable random walk steps to be combined with retraction towards a fixed point, thereby yielding a mean-reverting stochastic process analogous to an Ornstein-Uhlenbeck process. 
Alternatively, parallel translation across codimension-$1$ faces of cubes could enable a non-trivial covariance structure to be used, producing a diffusion process with `preferred directions'. 
Both types of stochastic process would require theorems analogous to our convergence result in order to allow approximation by analogous discrete-time processes. 
Instead of considering more general stochastic processes on cubical complexes, a second direction is to consider random walks and Brownian motion on a wider class of complexes such as a general Euclidean simplicial complex. 
As mentioned in Section~\ref{sec:BHV}, our results extend readily to a certain highly symmetric class of polyhedral complex. 
A proof for a general simplicial complex appears more challenging, due to the lack the symmetry properties used extensively in the proof of Theorem~\ref{thm:conv}.




\begin{thebibliography}{10}

\bibitem{fer14b}
A.~Feragen, S.~Huckemann, J.~S. Marron, E.~Miller, Asymptotic statistics on
  stratified spaces, Oberwolfach Reports 44.

\bibitem{fer14a}
A.~Feragen, M.~Nielsen, E.~B.~V. Jensen, A.~du~Plessis, F.~Lauze, Geometry and
  statistics: Manifolds and stratified spaces, J. Math. Imaging Vis. 50 (2014)
  1--4.

\bibitem{ooda14}
J.~S. Marron, A.~M. Alonso, Overview of object oriented data analysis,
  Biometrical J. 56~(5) (2014) 732--753.

\bibitem{kendall1999}
D.~G. Kendall, D.~Barden, T.~K. Carne, H.~Le, Shape and shape theory, John
  Wiley \& Sons, 1999.

\bibitem{bill01}
L.~Billera, S.~Holmes, K.~Vogtmann, Geometry of the space of phylogenetic
  trees, Adv. Appl. Math. 27 (2001) 733--767.

\bibitem{fera11a}
A.~Feragen, F.~Lauze, P.~Lo, M.~de~Bruijne, M.~Nielsen, Geometries on spaces of
  treelike shapes, in: Computer Vision--ACCV 2010, Springer, 2011, pp.
  160--173.

\bibitem{gav2016}
A.~Gavryushkin, A.~J. Drummond, The space of ultrametric phylogenetic trees, J.
  Theor. Bio. 403 (2016) 197--208.

\bibitem{dev16}
S.~Devadoss, S.~Petti, A space of phylogenetic networks, arXiv preprint
  arXiv:1607.06978.

\bibitem{gromov1987}
M.~Gromov, Hyperbolic groups, in: S.~M. Gersten (Ed.), Essays in group theory,
  Vol.~8 of Mathematical Sciences Research Institute Publications, Springer,
  1987, pp. 75--263.

\bibitem{niblo1998}
G.~A. Niblo, L.~D. Reeves, The geometry of cube complexes and the complexity of
  their fundamental groups, Topology 37~(3) (1998) 621--633.

\bibitem{mill15}
E.~Miller, M.~Owen, J.~S. Provan, Polyhedral computational geometry for
  averaging metric phylogenetic trees, Adv. Appl. Math. 68 (2015) 51--91.

\bibitem{nye11}
T.~Nye, Principal components analysis in the space of phylogenetic trees, Ann.
  Stat. 39 (2011) 2716--2739.

\bibitem{willis16}
A.~Willis, Confidence sets for phylogenetic trees, arXiv preprint
  arXiv:1607.08288.

\bibitem{barden2013central}
D.~Barden, H.~Le, M.~Owen, et~al., Central limit theorems for {F}r{\'e}chet
  means in the space of phylogenetic trees, Electron. J. Probab 18~(25) (2013)
  1--25.

\bibitem{hotz13}
T.~Hotz, S.~Huckemann, H.~Le, J.~Marron, J.~Mattingly, E.~Miller, J.~Nolen,
  M.~Owen, V.~Patrangenaru, S.~Skwerer, Sticky central limit theorems on open
  books, Ann. Appl. Probab. 23~(6) (2013) 2238--2258.

\bibitem{barden2017}
D.~Barden, H.~Le, The logarithm map, its limits and {F}r{\'e}chet means in
  orthant spaces, arXiv preprint arXiv:1703.07081.

\bibitem{wey16}
G.~Weyenberg, R.~Yoshida, D.~Howe, Normalizing kernels in the
  Billera-Holmes-Vogtmann treespace, IEEE ACM T. Comput. Bi. (2016)
  doi:10.1109/TCBB.2016.2565475.

\bibitem{nye14a}
T.~Nye, M.~White, Diffusion on some simple stratified spaces, J. Math. Imaging
  Vis. 50 (2014) 115--125.

\bibitem{frank1984}
D.~H. Frank, S.~Durham, Random motion on binary trees, J. Appl. Probab. 21~(1)
  (1984) 58--69.

\bibitem{brin01}
M.~Brin, Y.~Kifer, Brownian motion, harmonic functions and hyperbolicity for
  {E}uclidean complexes, Math. Z. 237~(3) (2001) 421--468.

\bibitem{enri01}
N.~Enriquez, Y.~Kifer, Markov chains on graphs and {B}rownian motion, J. Theor.
  Probab. 14~(2) (2001) 495--510.

\bibitem{kost2012}
V.~Kostrykin, J.~Potthoff, R.~Schrader, Brownian motion on metric graphs,
  J. Math. Phys. 53 (2012) 095205.

\bibitem{fernos2017}
T.~Fern{\'o}s, The {F}urstenberg--{P}oisson boundary and {CAT(0)} cube
  complexes, Ergod. Theor. Dyn. Syst. (2017) 1--44.

\bibitem{kendall1977}
D.~G. Kendall, The diffusion of shape, Adv. Appl. Probab. 9~(3) (1977)
  428--430.

\bibitem{BH1999}
M.~R. Bridson, A.~Haefliger, Metric Spaces of Non-Positive Curvature, Vol. 319,
  Springer-Verlag, 2011.

\bibitem{zair16}
S.~Zairis, H.~Khiabanian, A.~J. Blumberg, R.~Rabadan, Genomic data analysis in
  tree spaces, arXiv preprint arXiv:1607.07503.

\end{thebibliography}
\end{document}